\documentclass{article}

\usepackage{fullpage}

\usepackage{hyperref}
\usepackage{amsmath}
\usepackage{amssymb}
\usepackage{amsthm}
\usepackage{graphics}
\usepackage{graphicx}
\graphicspath{ {fig/} }
\usepackage{subfigure}
\usepackage{pifont}
\usepackage{algorithm2e}

\usepackage{enumitem}

\usepackage{float} 
\newfloat{algorithm}{thp}{lop}
\floatname{algorithm}{Algorithm}

\def\R{{\mathbb R}} 
\def\N{{\mathbb N}}

\def\bp{\noindent {\it Proof.}\ }
\def\ep{\hfill $\Box$} 

\newtheorem{theo}{Theorem}
\newtheorem*{theo:stability}{Theorem \ref{theo:stability}}

\newtheorem{lemma}{Lemma}
\newtheorem{coro}{Corollary}

\usepackage{authblk}

\begin{document}

\title{Balanced Fair Resource Sharing in Computer Clusters\footnote{\copyright\ 2017. This manuscript version is made available under the CC-BY-NC-ND 4.0 license \url{http://creativecommons.org/licenses/by-nc-nd/4.0/}.} \footnote{DOI: 10.1016/j.peva.2017.08.006}}

\author[1]{Thomas Bonald}
\author[2,1]{C\'eline Comte\thanks{The authors are members of LINCS, see \href{http://www.lincs.fr}{http://www.lincs.fr}.}\thanks{Emails: thomas.bonald@telecom-paristech.fr, celine.comte@nokia.com}}
\affil[1]{T\'el\'ecom ParisTech, Universit\'e Paris-Saclay, France}
\affil[2]{Nokia Bell Labs, France}

\date{\today}
\maketitle

\begin{abstract}
We represent a computer cluster as a multi-server queue 
 with some arbitrary  {graph of compatibilities} between jobs and servers.
 Each server processes its jobs sequentially in FCFS order.
 The service rate of a job at any given time is the sum of the service rates of all servers processing this job.
 We show that the corresponding queue is quasi-reversible and use this property to design a scheduling algorithm 
 achieving {balanced fair} sharing of the computing resources. \\
  {\noindent \bf Keywords:} Parallel processing, multi-server queues, balanced fairness, order independent queues, Whittle networks.
\end{abstract}

\section{Introduction}
\label{sec:intro}

Load balancing is a  critical component of large-scale computer clusters.
The flow of requests must be directed to the servers under various  constraints like data availability, state of the servers and  service level agreements.
In this paper, we represent these  constraints by an arbitrary  {graph of compatibilities} between jobs and servers.
The computer cluster can then be viewed as a multi-server queue
where jobs are allocated to servers according to this graph.
We assume that each server processes its jobs sequentially in FCFS order.
The service rate of a job at any given time is the sum of the service rates of all servers processing this job,
which means that resource pooling does not induce any processing overhead.
We prove  that, for Poisson job arrivals and exponential job sizes,  this multi-server queue is quasi-reversible \cite{kelly}.
Exploiting this property, we 
design a novel scheduling algorithm achieving \emph{balanced fair} sharing of the computing resources. This makes the stationary distribution of the system state insensitive to the job size distribution beyond the mean, a practically interesting property leading to simple and robust engineering rules. 

Balanced fairness was introduced in the context of data networks as the most efficient resource allocation having the insensitivity property, 
allowing the service provider to develop dimensioning rules based on average traffic only, and not on detailed traffic characteristics \cite{BP03-1}.
Formally, it is the only allocation such that the underlying Markov process is reversible and at least one resource is saturated in each state.
Balanced fairness has later been used to evaluate the performance  of content-distribution networks \cite{SV15,SV16}.
However, no scheduling algorithm has been proved so far to achieve this allocation, except in some specific cases where it coincides with proportional fairness \cite{KMW09,W12}. 
To the best of our knowledge, our scheduling algorithm is the first practical implementation of balanced fairness,
just like the round-robin scheduling algorithm is a well-known practical implementation of the ideal processor-sharing (PS) service discipline.

Multi-server queues with specialized servers have already been considered in \cite{C09,V12,A14,T15}
but these models assume that each job can be processed by only one server at a time.
Our model is closer to the multi-server queue with redundant requests introduced by Gardner et al.~\cite{G15,G16},
where the class of a job defines the set of servers on which it is replicated.
When several replicas of the same job are in service simultaneously on different servers,
their service times are independent
and the first instance to be completed stops the others.
It is easy to see that, under the assumption of exponential service times,
the two models are in fact equivalent.
In both cases, the FCFS policy makes the system very sensitive to the job size distribution,
so that the actual performance may vary significantly when the job sizes are not exponentially distributed with the same unit mean.
Our objective in this paper is precisely to relax this assumption
by designing a scheduling policy which makes the system insensitive to the job size distribution.

It turns out that our model belongs to the family of Order Independent (OI) queues \cite{B96,K11}.
As observed in \cite{K11},
OI queues generalize a number of queueing systems like BCMP networks under the FCFS or PS service discipline \cite{BCMP75},
multiserver stations with concurrent classes of customers (MSCCC)
and multiserver stations with hierarchical concurrency constraints (MSHCC)
\cite{MSCCC,MSHCC}.
OI queues are known to be quasi-reversible \cite{kelly}.
In particular, the state of the queue has an explicit stationary distribution
under the usual assumptions of Poisson arrivals and exponential service times.
Moreover, the stationary distribution remains explicit in the presence of random routing,
where jobs can leave or re-enter the queue upon service completion.

The first contribution of this paper is a scheduling algorithm
which exploits this last property to mitigate the sensitivity to the job size distribution.
Just like round-robin scheduling which implements the PS service discipline in the single-server case,
our mechanism enforces insensitivity by interrupting the jobs frequently and moving them to the end of the queue.
Routing is thus reinterpreted in terms of job interruptions and resumptions.
The queue state is updated in the course of the job shiftings
and the exponentially distributed sizes with unit mean in the multi-server queue
now represent small fragments of the jobs.
When the interruptions are frequent,
each job tends to go back and forth in the queue
and its average service rate is mainly determined
by the number of jobs of each class which are present at the same time.

This last observation motivates us to adopt a higher viewpoint.
Specifically, we aggregate the state of the multi-server queue to only retain the number of jobs of each class,
but not their arrival order.
This aggregate state turns out to be an appropriate level of granularity to analyze the behavior of the queue.
Its stationary measure is exactly that of a Whittle network \cite{S99}
containing as many PS queues as there are classes in the original multi-server queue.
This leads us to our second contribution: a new theoretical understanding of the multi-server queue.
Using the state aggregation,
we show in Theorem \ref{theo:stability} that the queue is stable under any vector of acceptable arrival rates.
In practice, it suggests that our algorithm will tend to stabilize the system whenever possible.
Our second theoretical result, stated in Theorem \ref{theo:average},
concerns the service rate received on average by each job as its position in the queue evolves.
We show that the average per-class service rates when the number of jobs of each class is given
are exactly those obtained by applying balanced fairness.

In addition to help us to understand the behavior of our algorithm,
this equivalence with balanced fairness allows us to derive
explicit expressions for the performance metrics of the multi-server queue with an arbitrary graph of compatibilities.
Indeed, the insensitivity property satisfied by balanced fairness 
was used for instance in \cite{BV04,SV15,SV16} to
obtain simple and explicit recursion formulas for the performance metrics.
Thanks to the aggregation we propose,
these formulas can be applied as they are in the multi-server queue.
They predict the exact performance of our algorithm when the job sizes are exponentially distributed.
For an arbitrary job size distribution, we show by simulation that
the system becomes approximately insensitive when the number of interruptions per job increases,
so that the performance tends to that obtained under balanced fairness.
We further observe that only a few interruptions per job actually suffice to reach approximate insensitivity.

The rest of the paper is organized as follows.
In Section \ref{sec:model}, we introduce the model
and give the stability condition after recalling results on OI queues.
In Section \ref{sec:whittle}, it is shown that  the  resource allocation is balanced fairness in the presence of reentrant jobs.  
This result is used in Section \ref{sec:algo} to design our  scheduling algorithm. 
Some numerical results are presented in Section \ref{sec:num}.
Section \ref{sec:ccl} concludes the paper.

\section{A multi-server queue}
\label{sec:model}

We consider a multi-server queue with $N$ job classes and $S$ servers.
The class of a job may identify a client of the data center or a type of service;
it defines the set of servers that can process this job.
For each $i = 1,\ldots,N$, class-$i$ jobs enter the queue according to an independent Poisson process of intensity $\lambda_i$.
The job sizes are independent, exponentially distributed with mean $1$.
We assume for now that each job leaves the queue immediately after service completion.

For each $i = 1,\ldots,N$, we denote by ${\cal S}_i \subset \{1,\ldots,S\}$
the set of servers that can process class-$i$ jobs.
Equivalently, these constraints can be represented as a bipartite graph of compatibilities
between the $N$ job classes and the $S$ servers,
where there is an edge between class $i$ and server $s$ if and only if $s \in {\cal S}_i$.
Each job can be served in parallel by multiple servers
and each server processes the job sequentially in FCFS order.
Hence, when there are several servers available for a job at its arrival, all these servers process this job.
When the service of a job is complete, all the servers that were processing it
are reallocated to the next job they can serve in the queue.
There is no service preemption, so that at most one job of each class can be served at any given time.

We describe the evolution of the sequence of jobs in the queue, ordered by their arrival times.
Thus the queue state is some sequence $c = (c_1,\ldots,c_n)$ of length $n$,
where $n$ is the number of jobs in the queue
and $c_k \in \{1,\ldots,N\}$ is the class of job in position $k$, for each $k = 1,\ldots,n$,
starting from the head of the queue.
$\emptyset$ denotes the empty state, with $n = 0$.

When a job is in service on several servers,
its service rate is the sum of the capacities of the servers that are processing it.
Denoting by $\mu_s > 0$ the capacity of server $s$ for each $s = 1,\ldots,S$,
the total service rate in any state $c$ is thus given by
$$
\mu(c) = \sum_{s \in \bigcup_{k=1}^n {\cal S}_{c_k}} \mu_s.
$$
For each $k = 1,\ldots,n$, the job in position $k$ receives service at rate
$$
\mu(c_1,\ldots,c_k) - \mu(c_1,\ldots,c_{k-1})
= \sum_{s \in {\cal S}_{c_k} \setminus \bigcup_{\ell=1}^{k-1} {\cal S}_{c_\ell}} \mu_s.
$$
Observe that the total service rate in state $c$ only depends on the set $A(c) = \{c_k : k = 1,\ldots,n\}$ of active classes in state $c$.
Hence, for each $A \subset \{1,\ldots,N\}$, we can denote by $\mu(A)$ the service rate in any state $c$ whose set of active classes is $A$.
This is a submodular function, as a weighted cover set function \cite{E03,N88}.

\paragraph{Order Independent queues}

This multi-server queue turns out to be a special case of Order Independent (OI) queues.
These were introduced by Berezner and Krzesinski \cite{B96,K11} as a new class of multi-class quasi-reversible queues.
The description of an OI queue is the same as for the multi-server queue
except that the total service rate $\mu$ can be any function of the queue state $c$ which satisfies the following properties:
\begin{itemize}
  \item Monotonicity: $\mu(c_1,\ldots,c_n) \le \mu(c_1,\ldots,c_n,i)$ for any state $c$ and class $i$,
  \item Order-independence: $\mu(c_1,\ldots,c_n) = \mu\left( c_{\sigma(1)}, \ldots, c_{\sigma(n)} \right)$
    for any state $c$ and permutation $\sigma$ of $1,\ldots,n$.
\end{itemize}
Additionally, it is assumed that $\mu(\emptyset) = 0$ and $\mu(c) > 0$ for all $c \neq \emptyset$.
The total service rate $\mu(c)$ is allocated to jobs in the order of their arrival
in the sense that the job in position $k$ receives service at rate
$\mu(c_1,\ldots,c_k) - \mu(c_1,\ldots,c_{k-1})$.
In particular, the service received by a job does not depend on the jobs arrived later in the queue.
One can easily verify that the service rate of our   multi-server queue satisfies these properties.

\paragraph{Stationary measure}

The queue state $c$ defines a Markov process on $\{1,\ldots,N\}^*$.
Since the multi-server queue is a special case of OI queues,
it follows from \cite[Theorem 2.2]{K11} that this queue is quasi-reversible, with stationary measure
\begin{equation}
  \label{eq:stationary}
  \forall c \in \{1,\ldots,N\}^*, \quad
  \pi(c) = \pi(\emptyset) \prod_{k=1}^n \frac{\lambda_{c_k}}{\mu(A(c_1,\ldots,c_k))}.
\end{equation}
This formula was also derived in \cite[Theorem 1]{G15} for multi-server queues with redundant requests.
However, the observation that the multi-server queue is quasi-reversible is critical because it allows us to add random routing between job classes \cite{kelly}.
As we will see in Sections \ref{sec:whittle}  and \ref{sec:algo},
this result plays a key role in the design of our algorithm.

\paragraph{Aggregate state}

As in \cite{K11}, we consider the number of jobs of each class in the queue, independently of their arrival order.
We denote by $x = (x_1,\ldots,x_N)$ the corresponding aggregate state, where $x_i$ is the number of class-$i$ jobs in the queue.
This defines a stochastic process on $\N^N$, which is not a Markov process in general.
We refer to the stationary measure of the aggregate state $x$ as
\begin{equation}
  \label{eq:defpix}
  \bar\pi(x) = \sum_{c:|c|=x} \pi(c),
\end{equation}
where $|c| \in \N^N$ denotes the vector of the numbers of jobs of each class in state $c$.
We also denote the set of active classes in any state $x$ by $A(x) = \{i: x_i > 0\}$.

It was proved in \cite{K11} that the stationary measure of the aggregate state is given by
\begin{equation}
  \label{eq:recpix}
  \bar\pi(x) = \bar\pi(0) \Phi(x) \prod_{i=1}^N \lambda_i^{x_i},
\end{equation}
where the function $\Phi$ satisfies the recursion $\Phi(0) = 1$ and, for each $x \neq 0$,
\begin{equation}
  \label{eq:recPhi}
  \Phi(x) = \frac1{\mu(A(x))} \sum_{i \in A(x)} \Phi(x-e_i),
\end{equation}
$e_i$ being the $N$-dimensional vector with $1$ in component $i$ and $0$ elsewhere, for any $i = 1,\ldots,N$.

\paragraph{Stability condition}

The following key result is proved in the appendix.

\begin{theo}
  \label{theo:stability}
  The multi-server queue is stable,
  in the sense that the underlying Markov process is ergodic,
  if and only if
  \begin{equation}
    \label{eq:stability}
    \forall A \subset\{1,\ldots,N\}, A \neq \emptyset, \quad
    \sum_{i \in A} \lambda_i < \mu(A).
  \end{equation}
\end{theo}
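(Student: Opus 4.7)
The plan is to compute $Z := \sum_c \pi(c)$ directly from the explicit stationary measure \eqref{eq:stationary} and check when it is finite. The underlying Markov chain on $\{1,\ldots,N\}^*$ is clearly irreducible---any state reaches $\emptyset$ via service completions and $\emptyset$ reaches any state via arrivals---so ergodicity is equivalent to $Z<\infty$.

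The key structural observation to exploit is that the sequence of active-class sets $(A(c_1,\ldots,c_k))_{k=0}^n$ is non-decreasing in $k$. I will decompose any nonempty $c$ by the ordered tuple $(i_1,\ldots,i_M)$ of its distinct classes listed in order of first appearance, together with the positions $1=T_1<T_2<\cdots<T_M\leq n$ of these first appearances. Setting $A_j=\{i_1,\ldots,i_j\}$, $L_j=T_{j+1}-T_j$ (with $T_{M+1}:=n+1$), and $\sigma_A:=\mu(A)^{-1}\sum_{i\in A}\lambda_i$, between first appearances $c_k$ ranges freely over $A_j$ while $\mu(A(c_1,\ldots,c_k))$ remains equal to $\mu(A_j)$. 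Summing \eqref{eq:stationary} first over the free symbols in each block and then over the block lengths $L_j\geq 1$ (each time a geometric series with ratio $\sigma_{A_j}$) will give
\begin{equation*}
\frac{Z}{\pi(\emptyset)}=1+\sum_{M=1}^N\sum_{(i_1,\ldots,i_M)}\prod_{j=1}^M\frac{\lambda_{i_j}}{\mu(A_j)}\sum_{L_j\geq 1}\sigma_{A_j}^{L_j-1},
\end{equation*}
where the inner sum ranges over all orderings of distinct classes from $\{1,\ldots,N\}$.

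Sufficiency is then immediate: if $\sigma_A<1$ for every nonempty $A$, each geometric sum equals $(1-\sigma_{A_j})^{-1}$ and $Z$ is a finite sum of finite positive terms, hence $Z<\infty$. For necessity, given any nonempty $A$ with $\sigma_A\geq 1$, I would fix any ordering $(i_1,\ldots,i_{|A|})$ of $A$; the corresponding geometric sum over $L_{|A|}$ then diverges, forcing $Z=\infty$, so the chain cannot be positive recurrent.

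The main obstacle I anticipate is justifying the combinatorial decomposition cleanly: the denominators $\mu(A(c_1,\ldots,c_k))$ in \eqref{eq:stationary} depend on the entire prefix rather than on $c_k$ alone, so one must check carefully that the monotonicity of the active set is exactly what allows the partitioning of positions into homogeneous blocks and the reduction of each block's contribution to a simple geometric series.
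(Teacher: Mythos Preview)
Your argument is correct and the combinatorial decomposition you describe is valid: the sequence of prefix active sets $(A(c_1,\ldots,c_k))_k$ is indeed non-decreasing, the positions of first appearance partition $\{1,\ldots,n\}$ into blocks where the denominator is constant, and the ``free'' positions in block $j$ range independently over $A_j$, producing the geometric factor $\sigma_{A_j}^{L_j-1}$. Summing over block lengths and then over the finitely many orderings of distinct classes yields a closed-form expression for $Z/\pi(\emptyset)$, and both directions of the stability criterion follow immediately.

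The paper takes a genuinely different route. It passes to the aggregate state $x$ and works with the balance function $\Phi$ from~\eqref{eq:recPhi}. For sufficiency it proves a comparison lemma showing that $\Phi$ is the smallest function satisfying~\eqref{eq:capa}, then dominates $\Phi$ by the product function $\Psi(x)=\prod_i \eta_i^{-x_i}$ for a carefully chosen $\eta$ with $\lambda_i<\eta_i$ and $\sum_{i\in A}\eta_i<\mu(A)$; summing $\Psi(x)\prod_i\lambda_i^{x_i}$ is then a product of geometric series. For necessity it shows by induction that $\Phi(x)$ is bounded below by a multinomial expression whenever $A(x)\subset A$, and that the resulting sum diverges. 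Your approach is more elementary in that it needs no auxiliary lemma and, as a bonus, produces an explicit formula for the normalization constant. The paper's approach, on the other hand, stays at the aggregate level consistent with the Whittle-network viewpoint used elsewhere, and its comparison lemma (minimality of $\Phi$) is a result of independent interest for balanced fairness.
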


In the rest of the paper, we assume that this condition is satisfied and we denote by $\pi$ the stationary distribution of the queue state.

\section{Average resource allocation}
\label{sec:whittle}

\paragraph{Re-entrant jobs}

Since the multi-server queue is quasi-reversible,
the stationary distribution of the queue state is not modified by the addition of routing between classes
as long as the effective arrival rates remain constant \cite{kelly}.
Assume for instance that each job leaves the queue with probability $p$ and re-enters as a job of the same class with probability $1-p$, for some $p \in (0,1]$.
The external arrival rate of class-$i$ jobs is taken equal to $\lambda_i p$ so that the effective arrival rate of class-$i$ jobs remains equal to $\lambda_i$.
The stationary distribution of the queue state $c$ is still given by \eqref{eq:stationary}, independently of $p$.
Each job re-enters the queue $1/p$ times on average, which tends to infinity when $p \to 0$.

In the limit, it is not relevant to consider the instantaneous service rate of each job depending on its position in the queue;
the metric of importance is the service rate received {\it on average} by each job
when the number of jobs of each class in the queue is given, corresponding to the aggregate state $x$.
The objective of this section is precisely to gain insights into the steady-state behavior of the multi-server queue viewed through its aggregate state.

\paragraph{Whittle network}

As we will see in Theorem \ref{theo:average} below,
the stationary distribution \eqref{eq:recpix} of the aggregate state $x$ of the multi-server queue
is that of the state of a Whittle network \cite{S99} of $N$ queues.

A Whittle network of $N$ queues
is a network of $N$ processor-sharing queues with state-dependent service rates.
The network state is described by the vector $x = (x_1,\ldots,x_N)$ giving the number of jobs at each queue.
The key feature of a Whittle network is that the relative variations of the service rates $\phi_1,\ldots,\phi_N$ of the queues
are constrained by the following balance property:
\begin{align}
  \label{eq:bal}
  \forall x \in \N^N, \quad
  \forall i,j: x_i > 0, x_j > 0, \quad
  \phi_i(x) \phi_j(x-e_i) = \phi_i(x-e_j) \phi_j(x).
\end{align}
This balance property is equivalent to the insensitivity property, i.e., the fact that the stationary distribution of the network state is 
independent of the job size distribution beyond  the mean
\cite{BP02}.

The service rates $\phi_1,\ldots,\phi_N$ satisfy the balance property \eqref{eq:bal}
if and only if there is a balance function $\Phi$ such that $\Phi(0) = 1$ and
\begin{align}
  \label{eq:Phi}
  \forall x \in \N^N, \quad
  \forall i = 1,\ldots,N, \quad
  \phi_i(x) = \begin{cases}
    \frac{\Phi(x-e_i)}{\Phi(x)} &\text{if } x_i > 0, \\
    0 & \text{otherwise}.
  \end{cases}
\end{align}
From this it is easy to show that the steady-state distribution $\bar\pi$ given by \eqref{eq:recpix}
with the function $\Phi$ given by \eqref{eq:Phi}
satisfies the local balance equations of the network.

Conversely, the balance function $\Phi$ uniquely defines
the service rates of the queues of a Whittle network through \eqref{eq:Phi}.
In particular, there exists a unique Whittle network of $N$ queues
with per-queue arrival rates $\lambda_1,\ldots,\lambda_N$
whose balance function is given by \eqref{eq:recPhi}.
The stationary distribution of this network state is
exactly the stationary distribution \eqref{eq:recpix} of the aggregate state of the multi-server queue.

The following two key results specify the relation between
the average per-class service rates in the multi-server queue
and the service rates of the queues in this equivalent Whittle network.

\begin{theo}
  \label{theo:average}
  The stationary distribution of the aggregate state of the multi-server queue
  is that of the state of a Whittle network of $N$ queues,
  with arrival rates $\lambda_1,\ldots,\lambda_N$ and state-dependent service rates $\phi_1,\ldots,\phi_N$ given by
  \begin{equation}
    \label{eq:average}
    \phi_i(x) = \sum_{c:|c| = x} \frac{\pi(c)}{\bar\pi(x)} \mu_i(c),
  \end{equation}
  where $\mu_i(c)$ is the service rate of the first class-$i$ job in state $c$ of the multi-server queue,
  for each $c \in \{1,\ldots,N\}^*$ and $i = 1,\ldots,N$.
\end{theo}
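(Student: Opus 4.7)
The plan is to reduce the theorem to an aggregate local balance identity and then to verify it by a direct computation. The discussion preceding the theorem already established that there exists a Whittle network with arrival rates $\lambda_1,\dots,\lambda_N$ and balance function $\Phi$ (satisfying \eqref{eq:recPhi}) whose stationary distribution is $\bar\pi$ and whose service rates are $\phi_i(x)=\Phi(x-e_i)/\Phi(x)$. Combining this with \eqref{eq:recpix} yields the Whittle local balance $\phi_i(x)\bar\pi(x) = \lambda_i\bar\pi(x-e_i)$. Hence, to show that the $\phi_i$ defined by \eqref{eq:average} coincide with these Whittle rates, it suffices to prove
\begin{equation*}
\sum_{c:\,|c|=x}\pi(c)\,\mu_i(c) \;=\; \lambda_i\,\bar\pi(x-e_i) \qquad \text{for every } x \text{ with } x_i>0.
\end{equation*}

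I would prove this by grouping the summands according to $d := c^{-i}$, the sequence obtained from $c$ by removing its first class-$i$ job. Fix $d$ with $|d|=x-e_i$, let $p$ be the position of the first class-$i$ job in $d$ (with the convention $p=|d|+1$ if $d$ contains no class-$i$ job), and let $c^{(k)}$ denote the sequence obtained by inserting a class-$i$ job at position $k\in\{1,\dots,p\}$ of $d$. Since every $c$ with $|c|=x$ arises in this way from a unique $d$, the identity above reduces to the per-$d$ identity
\begin{equation*}
\sum_{k=1}^{p}\pi(c^{(k)})\,\mu_i(c^{(k)}) \;=\; \lambda_i\,\pi(d).
\end{equation*}
Introducing $\tau_m := A(d_1,\dots,d_m)$ and $\rho_m := \mu(\tau_m)/\mu(\tau_m\cup\{i\})$ (so that $\rho_m=1$ whenever $m\ge p$), a direct comparison of \eqref{eq:stationary} at $c^{(k)}$ and at $d$, using that $A(c^{(k)}_1,\dots,c^{(k)}_m)$ equals $\tau_m$ for $m<k$ and $\tau_{m-1}\cup\{i\}$ for $m\ge k$, yields
\begin{equation*}
\frac{\pi(c^{(k)})\,\mu_i(c^{(k)})}{\pi(d)} \;=\; \lambda_i\,(1-\rho_{k-1})\prod_{m=k}^{p-1}\rho_m.
\end{equation*}

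Summing over $k$ then collapses by telescoping: writing $T_k := \prod_{m=k}^{p-1}\rho_m$, one has $(1-\rho_{k-1})T_k = T_k - T_{k-1}$, so the sum equals $T_p - T_0 = 1 - \prod_{m=0}^{p-1}\rho_m$, and this equals $1$ because the boundary convention $\mu(\emptyset)=0$ forces $\rho_0=0$. I expect the main technical obstacle to be the bookkeeping in the ratio $\pi(c^{(k)})/\pi(d)$: the product in \eqref{eq:stationary} runs over every prefix, and one must precisely identify the prefixes whose set of active classes in $c^{(k)}$ exceeds that of the corresponding prefix of $d$ by exactly $\{i\}$---these are the prefixes strictly between the insertion point $k$ and the position of the next class-$i$ job in $d$, and isolating this range is exactly what makes the telescoping collapse to a single factor that vanishes thanks to $\mu(\emptyset)=0$.
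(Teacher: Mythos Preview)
Your proposal is correct. Both your argument and the paper's reduce the theorem to the aggregate local balance identity
\[
\sum_{c:\,|c|=x}\pi(c)\,\mu_i(c)=\lambda_i\,\bar\pi(x-e_i),
\]
but they verify it differently. The paper invokes the partial balance equation
\[
\pi(d)\lambda_i=\sum_{k=1}^{n+1}\pi(d_1,\dots,d_{k-1},i,d_k,\dots,d_n)\bigl(\mu(A(d_1,\dots,d_{k-1},i))-\mu(A(d_1,\dots,d_{k-1}))\bigr),
\]
which it imports from the quasi-reversibility theory of OI queues, and then sums over $d$ with $|d|=x-e_i$ and reindexes; the terms with $k>p$ vanish automatically because the rate increment is zero once $i$ is already active, so only the first class-$i$ position survives and yields $\mu_i(c)$. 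Your argument instead groups by $d=c^{-i}$ from the start, computes the ratio $\pi(c^{(k)})\mu_i(c^{(k)})/\pi(d)$ explicitly from the product form \eqref{eq:stationary}, and collapses the sum over $k$ via the telescoping $\sum_{k=1}^{p}(1-\rho_{k-1})T_k=1$, using $\rho_0=\mu(\emptyset)/\mu(\{i\})=0$. In effect you are re-deriving, by hand and for this specific rate function, the very partial balance equation the paper cites. Your route is more self-contained (no appeal to the general OI/quasi-reversibility machinery), while the paper's is shorter once that machinery is granted; substantively the two arguments coincide term by term after the zero contributions are discarded.
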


\begin{proof}
  As observed earlier, the stationary distribution \eqref{eq:recpix} is exactly the stationary distribution of the state $x$ of a Whittle network
  of $N$ queues with arrival rates $\lambda_1,\ldots,\lambda_N$ and service rates $\phi_1,\ldots,\phi_N$ given by \eqref{eq:bal},
  where $\Phi$ is the balance function given by \eqref{eq:recPhi}.
  We just need to verify that these service rates satisfy \eqref{eq:average}.

  Let $x \in \N^N$ and $i = 1,\ldots,N$ such that $x_i > 0$.
  We have
  \begin{align*}
    \phi_i(x)
    = \frac{\Phi(x-e_i)}{\Phi(x)}
    = \frac{\bar\pi(x-e_i) \lambda_i}{\bar\pi(x)}
    = \frac1{\bar\pi(x)} \sum_{c:|c|=x-e_i} \pi(c) \lambda_i.
  \end{align*}
  The quasi-reversibility of the multi-server queue ensures that the following partial balance equation is satisfied in any state $c$
  (see the proof of \cite[Theorem 2.2]{K11} for more details):
  \begin{align*}
    \pi(c) \lambda_i
    = \sum_{k=1}^{n+1} & \pi(c_1,\ldots,c_{k-1},i,c_k,\ldots,c_n) \\
                       & \times \left( \mu(A(c_1,\ldots,c_{k-1},i)) - \mu(A(c_1,\ldots,c_{k-1})) \right).
  \end{align*}
  Letting $n = x_1 + \ldots + x_N$, we deduce that
  \begin{align*}
    \sum_{c:|c|=x-e_i} \pi(c) \lambda_i
    =& \sum_{c:|c|=x-e_i}
    \sum_{k=1}^n \pi(c_1,\ldots,c_{k-1},i,c_k,\ldots,c_{n-1}) \\
    & \qquad \times \left( \mu(A(c_1,\ldots,c_{k-1},i)) - \mu(A(c_1,\ldots,c_{k-1})) \right), \\
    =& \sum_{c:|c|=x} \pi(c)
    \sum_{\substack{k=1 \\ c_k=i}}^n
    \left( \mu(A(c_1,\ldots,c_{k-1},c_k)) - \mu(A(c_1,\ldots,c_{k-1})) \right), \\
    =& \sum_{c:|c|=x} \pi(c) \mu_i(c).
  \end{align*}
  This equation remains valid for any state $x$ and class $i$ such that $x_i = 0$.
\end{proof}

\begin{coro}
  \label{coro:capa}
  For each $x \in \N^N$, the vector of service rates
  $\phi(x) = (\phi_1(x), \ldots, \phi_N(x))$
  belongs to the capacity set
  \begin{equation*}
    {\cal C} = \left\{
      \phi \in \R_+^N:~
      \forall A \subset \{1,\ldots,N\},~
      \sum_{i \in A} \phi_i \le \mu(A)
    \right\}
  \end{equation*}
  and satisfies
  \begin{equation*}
    \sum_{i \in A(x)} \phi_i(x) = \mu(A(x)).
  \end{equation*}
\end{coro}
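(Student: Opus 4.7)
The plan is to derive both parts of the corollary directly from the representation \eqref{eq:average} of Theorem \ref{theo:average}. Fixing $x \in \N^N$ and any subset $A \subset \{1,\ldots,N\}$ and interchanging the two summations, we get
\begin{equation*}
  \sum_{i \in A} \phi_i(x) = \sum_{c : |c|=x} \frac{\pi(c)}{\bar\pi(x)} \sum_{i \in A} \mu_i(c),
\end{equation*}
so the entire statement reduces to the pathwise claim that $\sum_{i \in A} \mu_i(c) \le \mu(A)$ for every $c$ with $|c|=x$, with equality when $A = A(x)$. Nonnegativity of each $\phi_i(x)$ is immediate from \eqref{eq:average} since $\mu_i(c) \ge 0$.

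For a fixed $c$ with $|c|=x$, I would index the active classes as $i_1, i_2, \ldots, i_m$ by their order of first appearance, with first occurrence positions $k_1 < k_2 < \ldots < k_m$. The set of classes present strictly before position $k_j$ is then exactly $\{i_1,\ldots,i_{j-1}\}$, so the formula for the rate at position $k_j$ given in the paper rewrites as
\begin{equation*}
  \mu_{i_j}(c) = \sum_{s \in {\cal S}_{i_j} \setminus \bigcup_{r<j} {\cal S}_{i_r}} \mu_s.
\end{equation*}
The crucial observation is that no server contributes to two different $\mu_{i_j}(c)$'s: if some server $s$ were counted for indices $j<j'$, then $s \in {\cal S}_{i_j} \subset \bigcup_{r<j'} {\cal S}_{i_r}$, contradicting its appearance in $\mu_{i_{j'}}(c)$. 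Restricting the sum to the $j$'s with $i_j \in A$, every counted server lies in $\bigcup_{i \in A} {\cal S}_i$ and is counted at most once, which gives $\sum_{i \in A} \mu_i(c) \le \mu(A)$. (Classes $i \notin A(c)$ contribute nothing since $\mu_i(c)=0$.)

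Taking $A = A(x) = A(c)$ turns the same bookkeeping into a clean partition: every server $s \in \bigcup_{i \in A(c)} {\cal S}_i$ is counted exactly once, namely for the smallest $j$ with $s \in {\cal S}_{i_j}$. Hence $\sum_{i \in A(c)} \mu_i(c) = \mu(A(c)) = \mu(A(x))$ for each such $c$, and averaging over $c$ with $|c|=x$ yields the equality part. I do not anticipate a real obstacle; the only point to get right is the non-overlap argument for the sets ${\cal S}_{i_j} \setminus \bigcup_{r<j} {\cal S}_{i_r}$, which becomes transparent once the active classes are indexed by their first-appearance position.
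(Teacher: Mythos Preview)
Your proposal is correct and follows essentially the same approach as the paper: both start from \eqref{eq:average}, swap the sums, and reduce to the pathwise bound $\sum_{i \in A} \mu_i(c) \le \mu(A)$ with equality when $A = A(x)$. The only difference is that the paper simply asserts this pathwise fact, whereas you spell out the disjointness of the sets ${\cal S}_{i_j} \setminus \bigcup_{r<j} {\cal S}_{i_r}$ to justify it; this is a welcome clarification rather than a different route.
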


\begin{proof}
  Let $x \in \N^N$.
  For all $A \subset \{1,\ldots,N\}$, we have by \eqref{eq:average},
  \begin{equation*}
    \sum_{i \in A} \phi_i(x)
    = \sum_{c:|c|=x} \frac{\pi(c)}{\bar\pi(x)} \sum_{i \in A} \mu_i(c)
    \le \sum_{c:|c|=x} \frac{\pi(c)}{\bar\pi(x)} \mu(A)
    = \mu(A).
  \end{equation*}
  For $A = A(x)$, we have
  \begin{equation*}
    \sum_{i \in A(x)} \mu_i(c)
    = \mu(A(x))
  \end{equation*}
  for each $c$ such that $|c|=x$, so that the above inequality is an equality.
\end{proof}

\paragraph{Balanced fairness}

By Theorem \ref{theo:average}, the average service rates in the multi-server queue satisfy the balance property \eqref{eq:bal}.
In view of Corollary \ref{coro:capa},
the resource allocation is also Pareto-efficient
in the sense that the server resources are always maximally consumed.
The unique resource allocation which satisfies these two properties is known as balanced fairness \cite{BP03-1}.

Going back to the motivating example with re-entrant jobs,
in the limit where $p \to 0$,
the external arrivals and departures become rare
and the jobs tend to re-enter the queue several times.
The detailed queue state evolves with these frequent job shifts,
while the aggregate state remains constant.
On average,
all jobs of class $i$ tend to be served at the same service rate in aggregate state $x$, with total service rate $\phi_i(x)$.
When the queue contains only one server,
it means that the capacity of this server is divided equally among all jobs in the queue, similarly to round-robin scheduling.
In general, this corresponds to the above Whittle network where each of the $N$ queues applies the processor-sharing service discipline.
Such a queueing system is known to have the insensitivity property described above.
This property will be exploited in the next section to design a scheduling algorithm in computer clusters based on re-entrant jobs after forced service interruptions.

\paragraph{Performance metrics}

Several works have focused on predicting the performance of systems under balanced fairness, see for instance \cite{BV04,SV15,SV16}.
Their results can be reused as they are to predict the performance of the multi-server queue.
Indeed, the above aggregation results show that any performance metric
which can be expressed in terms of the aggregate state in the multi-server queue
is actually equal to the corresponding metric in the equivalent Whittle network.
This is stated more formally in the following corollary, which follows from Theorem \ref{theo:average}.

\begin{coro}
  \label{coro:perf}
  Consider a function $f$ defined on $\{1,\ldots,N\}^*$ which is order-independent,
  in the sense that there exists a function $g$ defined on $\N^N$
  such that $f(c) = g(|c|)$ for all $c \in \{1,\ldots,N\}^*$.
  Then the expected value of $f$ applied to the state $c$ of the multi-server queue
  is equal to the expected value of $g$ applied to the state $x$ of the equivalent Whittle network.
\end{coro}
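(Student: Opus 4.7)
The plan is to prove Corollary \ref{coro:perf} by a direct partitioning of the sum defining the expectation, using order-independence of $f$ to reindex by the aggregate state.

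First I would write the expectation of $f$ under the stationary distribution $\pi$ of the detailed queue state:
\begin{equation*}
  \E[f(c)] = \sum_{c \in \{1,\ldots,N\}^*} f(c)\, \pi(c).
\end{equation*}
Then I would partition the sum over detailed states $c$ according to the value of the aggregate vector $|c| \in \N^N$, giving
\begin{equation*}
  \E[f(c)] = \sum_{x \in \N^N} \sum_{c : |c|=x} f(c)\, \pi(c).
\end{equation*}
At this point the order-independence hypothesis $f(c) = g(|c|)$ lets me pull $g(x)$ out of the inner sum, after which the definition \eqref{eq:defpix} of $\bar\pi$ collapses the inner sum to $\bar\pi(x)$:
\begin{equation*}
  \E[f(c)] = \sum_{x \in \N^N} g(x) \sum_{c : |c|=x} \pi(c) = \sum_{x \in \N^N} g(x)\, \bar\pi(x).
\end{equation*}

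To conclude, I would invoke Theorem \ref{theo:average}, which identifies $\bar\pi$ as the stationary distribution of the state of the equivalent Whittle network. Thus the last sum is precisely the expected value $\E[g(x)]$ in the Whittle network, which gives the claim. There is no real obstacle here; the corollary is essentially a bookkeeping consequence of the aggregation identity \eqref{eq:defpix} combined with Theorem \ref{theo:average}, and the only thing to be mindful of is that the argument is purely distributional (it does not require the two processes to be coupled, only that their stationary laws project onto each other via $c \mapsto |c|$).
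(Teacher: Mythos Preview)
Your argument is correct and matches the paper's proof essentially line for line: both partition the sum over detailed states by the aggregate state, use $f(c)=g(|c|)$ to factor $g(x)$ out, and collapse the inner sum via \eqref{eq:defpix}. The only cosmetic difference is that the paper compresses the computation into a single chain of equalities and leaves the identification with the Whittle network implicit from the preceding discussion, whereas you spell out the invocation of Theorem~\ref{theo:average} explicitly.
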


\begin{proof}
  Gathering the detailed queue states which correspond to the same aggregate state, we obtain directly
  \begin{align*}
    \sum_{c \in \{1,\ldots,N\}^*} \pi(c) f(c)
    = \sum_{c \in \{1,\ldots,N\}^*} \pi(c) g(|c|)
    = \sum_{x \in \N^N} \left( \sum_{c:|c|=x} \pi(c) \right) g(x)
    = \sum_{x \in \N^N} \bar\pi(x) g(x).
  \end{align*}
\end{proof}

Note that this result holds for any stationary measure $\pi$.
In particular, taking the measure $\pi$ such that $\pi(\emptyset) = 1$ in the multi-server queue
(that is, $\bar\pi(0) = 1$ in the equivalent Whittle network)
and for $f$ the constant function equal to $1$,
we obtain that the normalization constants in the multi-server queue and in the equivalent Whittle network are equal.

A metric of importance is the mean number of jobs of a given class in the multi-server queue,
from which we can deduce the mean delay (or equivalently the mean service rate)
perceived by the jobs of class $i$, for each $i = 1,\ldots,N$.
Coming back to the stationary distribution $\pi$
with the function $f$ which counts the number of jobs of a given class in the multi-server queue,
we deduce from Corollary \ref{coro:perf} that the mean number of class-$i$ jobs in the multi-server queue
is equal to the mean number of jobs at queue $i$ in the equivalent Whittle network,
for each $i = 1,\ldots,N$.

Hence, the recursive formulas of \cite[Theorem 4]{SV15} and \cite[Theorem 1]{SV16}
give directly the normalization constant of the stationary distribution,
as well as the mean number of jobs of each class in the multi-server queue.
The numerical results presented in Section \ref{sec:num} are based on  this observation.

\section{A scheduling algorithm for computer clusters}
\label{sec:algo}

We apply the previous results to the problem of resource sharing in computer clusters.
Consider a cluster of $S$ servers.
For all $s = 1,\ldots,S$, we denote by $C_s$ the service capacity of server $s$, in floating-point operations per second (flops).
Any incoming job consists of some random number of floating-point operations, referred to as the job size, and is assigned some set of servers.
This assignment, possibly random, may depend on the type of the job but not on the system state (e.g., the number of ongoing jobs).
It is fixed for the entire life of the job in the system.
The job can then be processed in parallel by any subset ${\cal S}$ of the servers in its assignment,
at rate $\sum_{s \in {\cal S}} C_s$. Job sizes are assumed i.i.d.~with mean  $\sigma$.

\paragraph{Balanced fairness}

We aim at sharing the service capacity of the cluster according to balanced fairness,
so that the stationary distribution of the number of jobs of each class  is independent of the job size distribution beyond the mean \cite{BP03-1}.
Applying the FCFS service discipline to each server is clearly not suitable.
For $S = 1$ for instance, the system reduces to a single-server FCFS queue,
which is known to be very sensitive to the job size distribution. For $S\ge 1$, the system corresponds to the multi-server queue described in Section \ref{sec:model}, with service rates   $\mu_s=C_s/\sigma$ for all $s=1,\ldots,S$,  provided  job sizes are i.i.d.~exponential with mean $\sigma$.

We apply the idea of re-entrant jobs mentioned in Section \ref{sec:whittle}.
Specifically, we  interrupt each service after some exponential time and  force  the corresponding job to re-enter  the queue as a new job of the same type, with the same server assignment, 
so that the service can be resumed later and the resources can be reallocated. 
Observe that, when job sizes are i.i.d.~exponential with mean $\sigma$, the stationary distribution of the aggregate state remains unchanged by the quasi-reversibility of the OI queue.
When the frequency of service interruptions increases, the resources tend to be shared fairly, in the sense of balanced fairness,
and the stationary distribution  becomes insensitive to the job size distribution beyond the mean.
For $S = 1$ for instance, the system tends to a single-server PS queue,
which is known to have the insensitivity property. For $S\ge 1$, the system tends to a Whittle network of PS queues, which is also known to have the insensitivity property \cite{BP02}.

\paragraph{Scheduling algorithm}

A single virtual queue is used to allocate servers to jobs.
Any incoming job is put at the end of the queue.
Each server $s$ interrupts the job in service, if any, after some exponential time with parameter $C_s/\theta$, for some $\theta > 0$.
Observe that $\theta$ can be interpreted as the mean number of floating-point operations before service interruption.
Any interrupted job releases {\it all} servers that process this job and is moved to the end of the queue as a new job.
The released resources are reallocated according to the same service discipline, accounting for the new order in the queue.
Note that the interrupted service may be resumed immediately or later, when some resources become available, depending on the state of the queue.

\begin{algorithm}[!tp]
  \begin{minipage}[t]{.5\textwidth}
    \begin{algorithm}[H]
      \SetKwBlock{Begin}{begin}{end}
      {\bf on} {\rm job arrival}\\
      \Begin{
        $i\gets$ job class\\
        enqueue job\\
        $n\gets n+1$\\
        \For{{\bf all} $s\in {\cal S}_i$}
        {
          \If{$a_s=0$}{$a_s\gets i$\\  $t_s\gets$ on}
        }
      }

      \

      {\bf on} {\rm job departure}\\
      \Begin{
        $i\gets$ job class\\
        $k\gets$ job position in the queue\\
        dequeue job\\
        $n\gets n-1$\\
        \For{{\bf all} $s\in {\cal S}_i$}
        {
          \If{$a_s=i$}
          {$a_s\gets 0$\\
            $t_s\gets$ off
          }
        }
        \While{$k \le n$}
        {
          \For{{\bf all} $s\in {\cal S}_{c_k}\cap {\cal S}_i$}
          {
            \If{$a_s=0$}
            {$a_s\gets c_k$\\
              $t_s\gets$ on
            }
          }
          $k\gets k +1$
        }
      }

    \end{algorithm}
  \end{minipage}
  \begin{minipage}[t]{.5\textwidth}
    \begin{algorithm}[H]
      \SetKwBlock{Begin}{begin}{end}
      {\bf on} {\rm timer expiration}\\
      \Begin{
        $s\gets$ server\\
        $i\gets a_s$ \\
        $k\gets$ job position in the queue\\
        interrupt job service  \\
        move job  to the end of the queue\\
        \For{{\bf all} $s\in {\cal S}_i$}
        {
          \If{$a_s=i$}
          {$a_s\gets 0$\\
            $t_s\gets$ off
          }
        }
        \While{$k \le n$}
        {
          \For{{\bf all} $s\in {\cal S}_{c_k}\cap {\cal S}_i$}
          {
            \If{$a_s=0$}
            {$a_s\gets c_k$\\
              $t_s\gets$ on
            }
          }
          $k\gets k +1$
        }
      }
    \end{algorithm}
  \end{minipage}
  \caption{\label{algo}Scheduling algorithm based on random service interruptions.}
\end{algorithm}

The pseudo-code of the algorithm is given in Algorithm \ref{algo},
where $a_s \in \{0,1,\ldots,N\}$ denotes the activity state of server $s$
($a_s = 0$ if server $s$ is idle and $a_s = i$ if server $s$ is processing a job of class $i$)
and $t_s \in \{\text{on},\text{off}\}$ indicates the state of the timer that triggers service interruption at server $s$
(when set on, the timer has an exponential distribution with parameter $C_s / \theta$).
The algorithm depends on a single parameter $\theta$, which determines the mean number of service interruptions per job. This should be compared to the mean job size 
$\sigma$. 
Specifically, the ratio $m=\sigma/\theta$ corresponds to the mean number of service interruptions per job. When $m\to \infty$, services are frequently interrupted and the corresponding resource allocation tends to balanced fairness, as mentioned in Section \ref{sec:whittle}, an allocation that  has the insensitivity property; when $m\to 0$, services are almost never interrupted and the service discipline is approximately FCFS per server, which is highly sensitive to the job size distribution.  
We shall see in the following section  that, for large systems with random assignment, setting $m=1$ is in fact sufficient to get approximate insensitivity, i.e.,  it is sufficient in practice to interrupt each job only {\it once} on average.

\section{Numerical results}
\label{sec:num}

In this section, we provide numerical results showing the performance of the algorithm described above.
We are specifically interested in evaluating the mean number $m$ of interruptions per job
which is sufficient in practice to obtain approximate insensitivity to the job size distribution.

\paragraph{Job size distribution}

As in Section \ref{sec:algo}, the job sizes are assumed i.i.d.
To test the sensitivity, we successively evaluate the performance of our algorithm under three job size distributions.

We first consider job sizes with a bimodal number of exponentially distributed phases.
More precisely, the size of any incoming job
is a sum of independent random variables which are exponentially distributed with mean $\varsigma$.
The number of these random variables follows a bimodal distribution:
it is equal to $n_1$ with probability $p_1$ and to $n_2$ with probability $p_2$, for some $p_1,p_2$ such that $p_1+p_2=1$.
We let $\varsigma=1/5, n_1 = 25$, $n_2 = 1$, $p_1 = 1/6$ and $p_2 = 5/6$.
The mean job size is given by $\sigma = (p_1 n_1 + p_2 n_2) \varsigma = 1$
while the standard deviation is approximately equal to $1.84$.

We consider a second alternative where the job size distribution is hyperexponential:
any incoming job has an exponential distribution with mean $\sigma_1$ with probability $p_1$
and an exponential distribution with mean $\sigma_2$ with probability $p_2$,
for some $p_1,p_2$ such that $p_1+p_2=1$.
We let $\sigma_1 = 5$, $\sigma_2 = 1/5$, $p_1 = 1/6$, $p_2 = 5/6$,
corresponding to a mean job size $\sigma = p_1 \sigma_1 + p_2 \sigma_2 = 1$
and standard deviation approximately equal to $2.05$.

Finally, we consider job sizes with a heavy-tailed number of exponential phases.
Like for the bimodal case,
the size of any incoming job is a sum of independent random variables which are exponentially distributed with mean $\varsigma$.
The number of these random variables follows a Zipf distribution with parameters $K \in \N$ and $\alpha > 0$:
for each $k = 1,\ldots,K$, the probability that there are $k$ terms in the sum is proportional to $1/k^\alpha$.
We let $\varsigma=1$, $K = 200$ and $\alpha = 2$.
The mean job size is then given by
$$
\sigma = \frac
{ \sum_{k=1}^K \frac1{k^{\alpha-1}} }
{ \sum_{k=1}^K \frac1{k^\alpha} }
\varsigma,
$$
approximately equal to $3.58$,
while the standard deviation is approximately equal to $10.61$.

\paragraph{Performance metrics}

We measure the performance in terms of mean service rate and mean delay. 
Let $\phi_i(x)$ be the service rate of class-$i$ jobs in state $x$, as defined by \eqref{eq:average}. 
The mean service rate of any class-$i$ job is then given by:
$$
\gamma_i=\frac{ \sum_{x} \bar\pi(x) \phi_i(x)}{\sum_{x} \bar\pi(x) x_i}.
$$
By conservation, we have:
$$
\gamma_i=\frac{\lambda_i \sigma}{\sum_{x} \bar\pi(x) x_i}.
$$
Observe that $\gamma_i$ cannot exceed the maximum service rate of class-$i$ jobs, given by
$\sum_{s\in {\cal S}_i}C_s$.
The mean delay $\delta_i$ of any class-$i$ job follows from the mean number of class-$i$ jobs by Little's law,
and is inversely proportional to the mean service rate:
\begin{equation}
  \label{eq:delta}
  \delta_i = \frac{\sum_x \bar\pi(x) x_i}{\lambda_i} = \frac\sigma{\gamma_i}.
\end{equation}

\paragraph{Performance evaluation}

There are $S$ servers and $N$ job classes.
Class-$i$ jobs arrive according to a Poisson process with intensity $\lambda_i$.
The mean number of interruptions per job is given by $m = \sigma / \theta$,
where $\sigma$ is the mean job size
and $\theta$ is the parameter of the algorithm used to set the random timers.
We compare the results for $m = 1$ and $m = 5$
with those obtained under FCFS policy (that is, without service interruption) and balanced fairness.

The performance metrics under balanced fairness will be given in closed form for the configurations considered below.
They give the performance of our algorithm and of FCFS policy when the job size distribution is exponential.
We resort to simulations to assess the performance under the three job size distributions listed earlier.
Each simulation point follows from the average of $100$ independent runs,
each corresponding to  $10^6$ jumps of the corresponding Markov process, after a warm-up period of $10^6$ points;
the corresponding $95\%$ confidence intervals are drawn in semitransparent on the figures.

\paragraph{Three servers}

We first consider a toy example with $S=3$ servers and 
$N=2$ job classes. Servers 1 and 2 are dedicated to job classes 1 and 2, respectively, while server 3 is shared by all jobs. 
In view of Theorem \ref{theo:stability}, the stability condition is:
$$
\lambda_1< \mu_1+\mu_3,\quad \lambda_2< \mu_2+\mu_3,\quad \lambda_1+\lambda_2< \mu_1+\mu_2+\mu_3.
$$
Define the corresponding loads: 
$$
\rho_1=\frac{\lambda_1}{\mu_1+\mu_3},\quad \rho_2=\frac{\lambda_2}{\mu_2+\mu_3},\quad \rho=\frac{\lambda_1+\lambda_2}{\mu_1+\mu_2+\mu_3}.
$$
Observing that the capacity set is that of a tree network \cite{BV04}, we deduce the mean service rates under balanced fairness:
\begin{align}
  \label{eq:tree1}
  \gamma_1 &= \left(
    \frac1{\mu (1-\rho)}
    + 
    \frac
    {\frac{\mu_2}{\mu_1+\mu_3} \frac{1-\rho_2}{1-\rho_1}}
    {\mu - (\mu_1+\mu_3) \rho_1 - (\mu_2+\mu_3) \rho_2 + \mu_3 \rho_1 \rho_2}
  \right)^{-1}, \\
  \label{eq:tree2}
  \gamma_2 &= \left(
    \frac1{\mu(1-\rho)}
    + \frac{
      \frac{\mu_1}{\mu_2+\mu_3}
      \frac{1-\rho_1}{1-\rho_2}
    }{\mu - (\mu_1+\mu_3) \rho_1 - (\mu_2+\mu_3) \rho_2 + \mu_3 \rho_1 \rho_2}
  \right)^{-1}
\end{align}
with $\mu=\mu_1+\mu_2+\mu_3$.
The mean delays follow by \eqref{eq:delta}.
Explicit formulas for the performance metrics under this assignment graph
were also derived in \cite{G15} in the context of multi-server queues with redundant requests.
Recall that these are the exact performance metrics when the job size distribution is exponential.
The results are shown in Figures \ref{fig:m} and \ref{fig:n} with respect to the load $\rho$, for $\lambda_1=\lambda_2$.
In Figure \ref{fig:m}, the system is symmetric and the maximum service rate is 2 for both classes.
In Figure \ref{fig:n}, the system is asymmetric:
class-1 jobs (in blue) can be served by servers 1,3 and thus have a maximum service rate of 2;
class-2 jobs (in red) can be served by server 3 only and thus have a maximum service rate of 1.

\begin{figure}[h]
  \centering
  \subfigure[Bimodal number of exponentially distributed phases]{\includegraphics{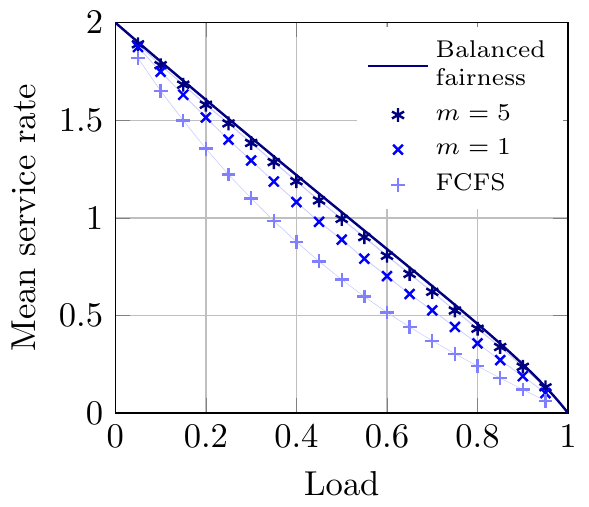} \hfill \includegraphics{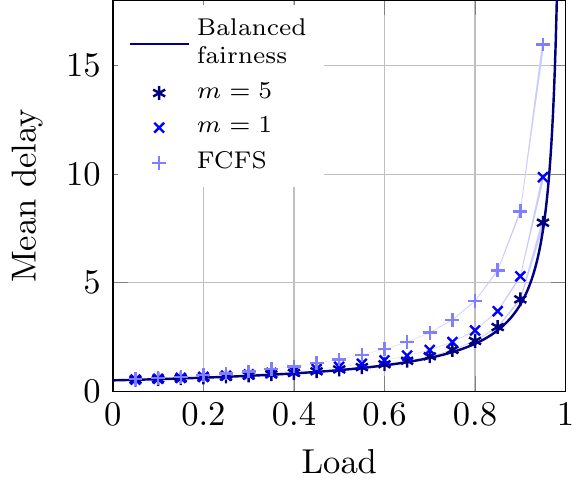}}
  \subfigure[Hyperexponential]{\includegraphics{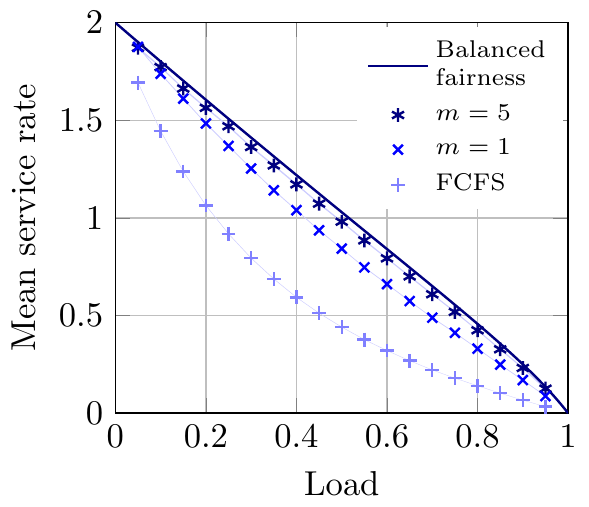} \hfill \includegraphics{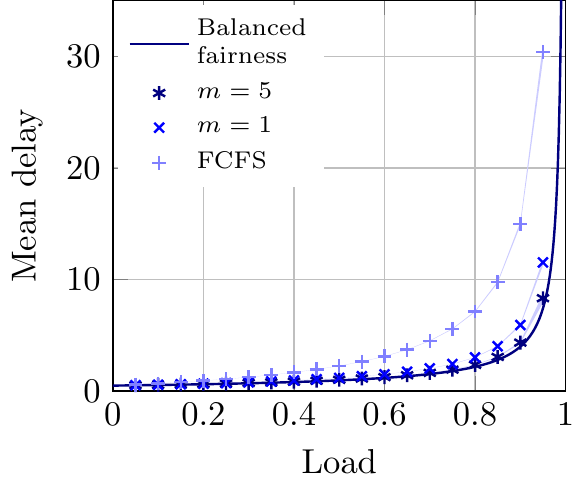}}
  \subfigure[Zipf number of exponentially distributed phases]{\includegraphics{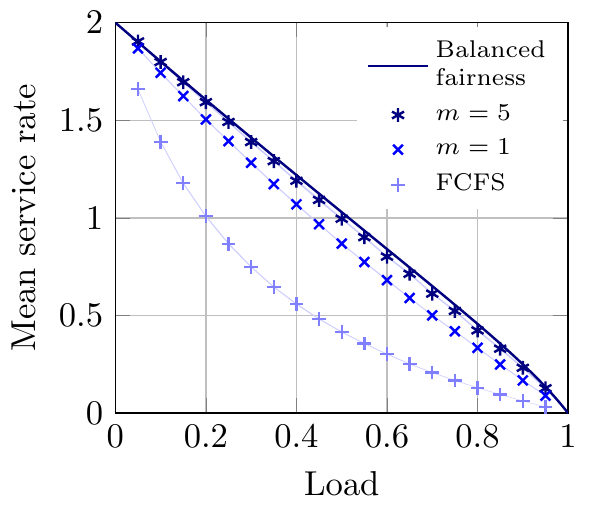} \hfill \includegraphics{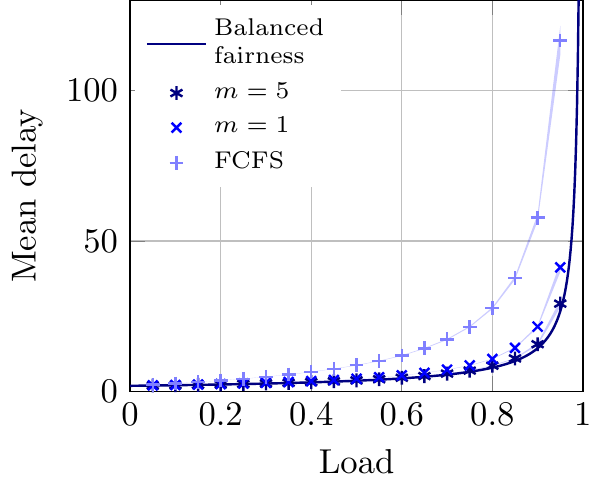}}
  \caption{Performance metrics for $N=2$ job classes sharing $S=3$ servers ($\mu_1 = \mu_2 = \mu_3 = 1$).}
  \label{fig:m}
\end{figure}

\begin{figure}[h]
  \centering
  \subfigure[Bimodal number of exponentially distributed phases]{\includegraphics{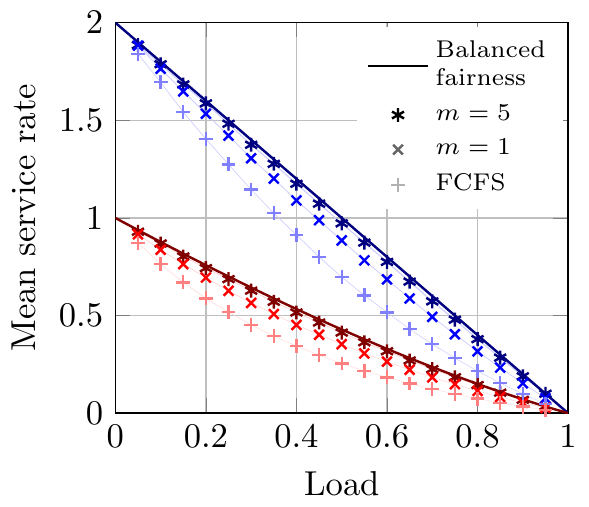} \hfill \includegraphics{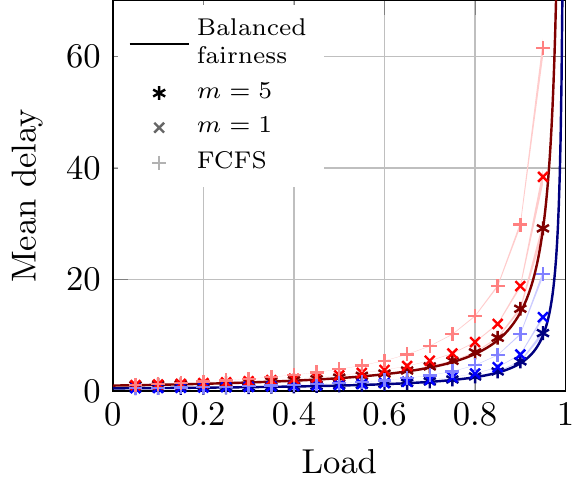}}
  \subfigure[Hyperexponential]{\includegraphics{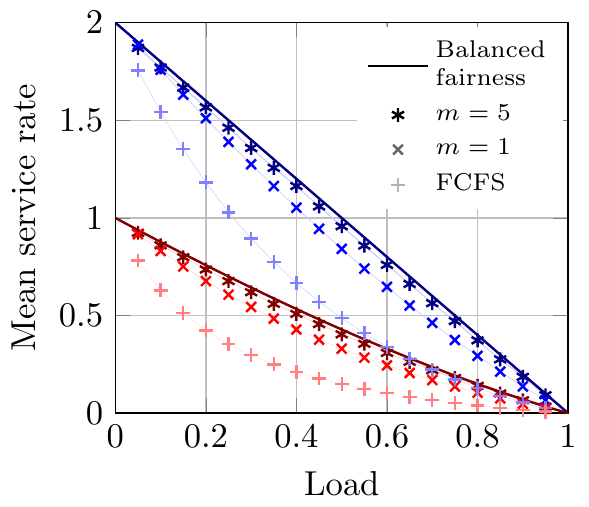} \hfill \includegraphics{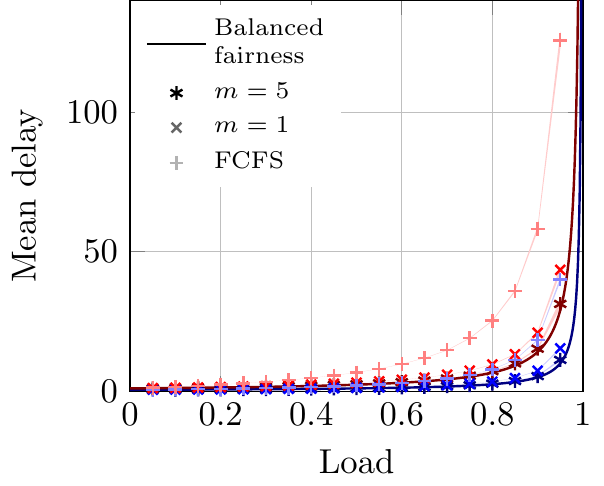}}
  \subfigure[Zipf number of exponentially distributed phases]{\includegraphics{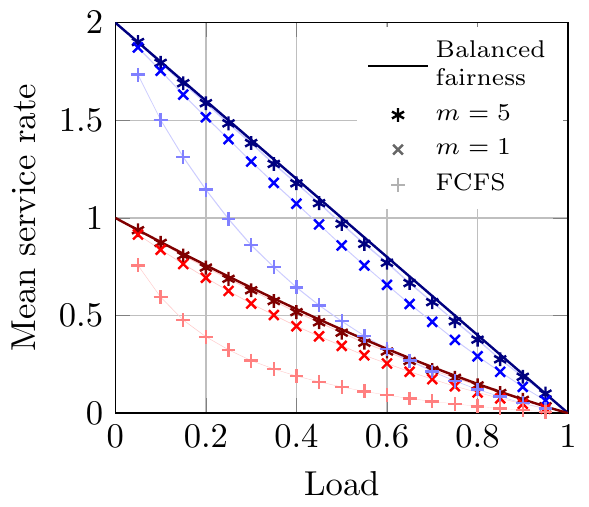} \hfill \includegraphics{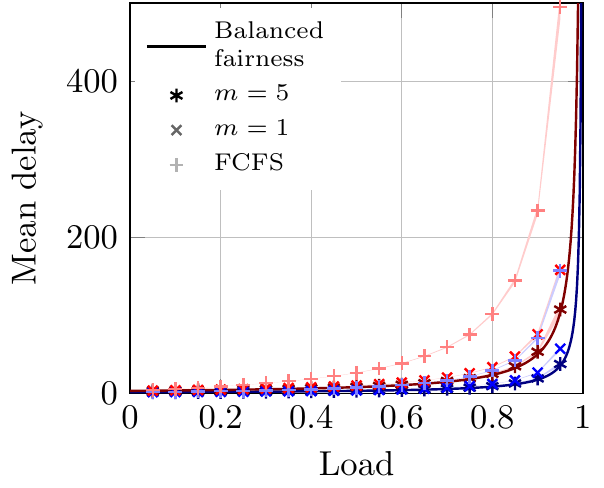}}
  \caption{Performance metrics for $N=2$ job classes sharing $S=2$ servers ($\mu_1 = \mu_3 = 1$, $\mu_2 = 0$).
  The performance of class-$1$ jobs, which have access to both servers, appears in blue on the figure
  (top plot for the mean service rate and bottom plot for the mean delay).
  The performance of class-$2$ jobs, which have access to only one server, appears in red on the figure
  (bottom plot for the mean service rate and top plot for the mean delay).
  }
  \label{fig:n}
\end{figure}

Applying our scheduling algorithm with only $m = 1$ (that is, 1 service interruption per job on average)
brings a significant improvement compared to FCFS policy.
For $m=5$, performance is very close to that of balanced fairness and approximately insensitive
(i.e., very close to that obtained for an exponential job size distribution)
even for job sizes with a Zipf number of exponential phases.

\paragraph{Large system with random assignment}

We now consider a large system of $S=100$ servers, each with unit service rate. Each incoming job is assigned $d$ 
servers chosen uniformly  at random, corresponding to  $N={S \choose d}$ job classes, as considered in \cite{G16}.
The mean service rate and the mean delay follow from
an explicit formula for the mean number of jobs in the queue derived in \cite{G16}.
The simulation results are obtained in the conditions described above.
The results for $d = 2$ and $d = 3$ are shown in Figures \ref{fig:random2} and \ref{fig:random3}, respectively.
We observe that performance is very close to that of balanced fairness, even for low values of $m$.
It is sufficient in practice to set $m=1$, corresponding to only {\it one} service interruption per job on average.

\begin{figure}[h]
  \centering
  \subfigure[Bimodal number of exponentially distributed phases]{\includegraphics{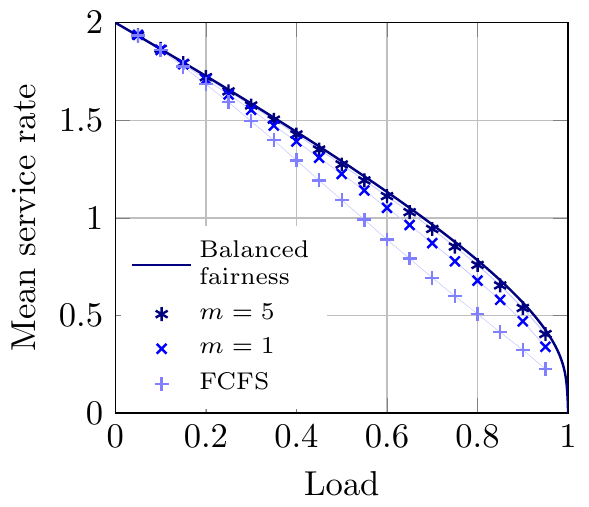} \hfill \includegraphics{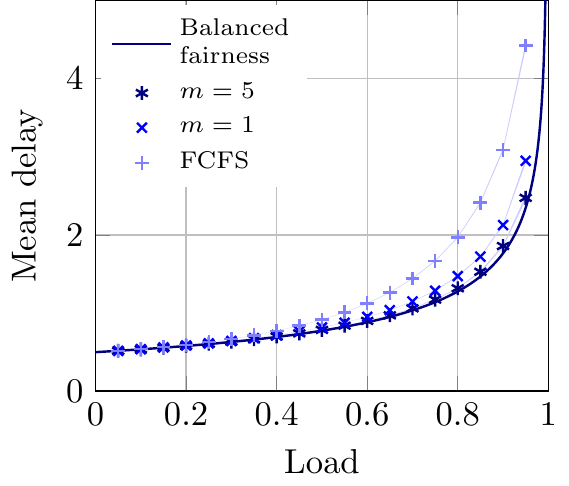}}
  \subfigure[Hyperexponential]{\includegraphics{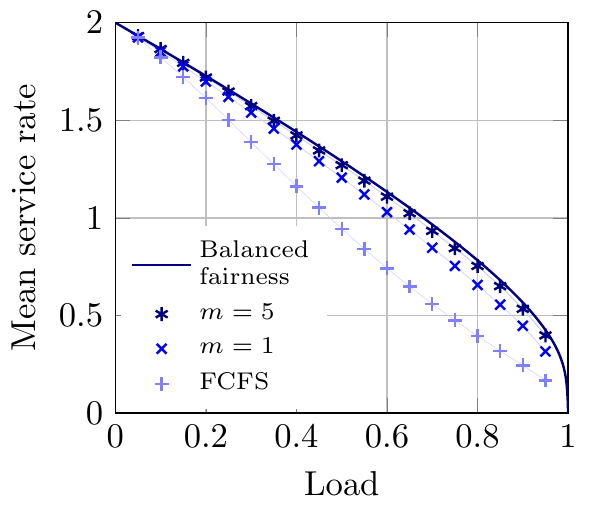} \hfill \includegraphics{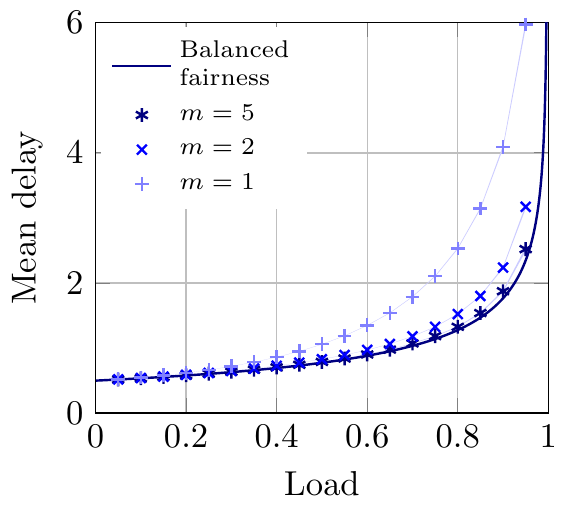}}
  \subfigure[Zipf number of exponentially distributed phases]{\includegraphics{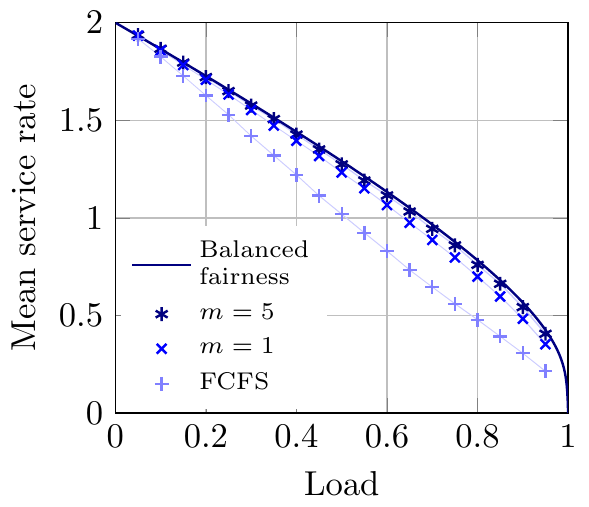} \hfill \includegraphics{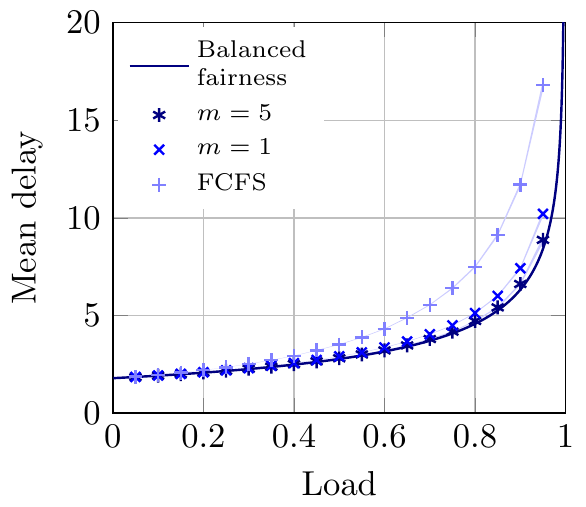}}
  \caption{Performance metrics for random assignment of $d = 2$ servers among $S = 100$.}
  \label{fig:random2}
\end{figure}

\begin{figure}[h]
  \centering
  \subfigure[Bimodal number of exponentially distributed phases]{\includegraphics{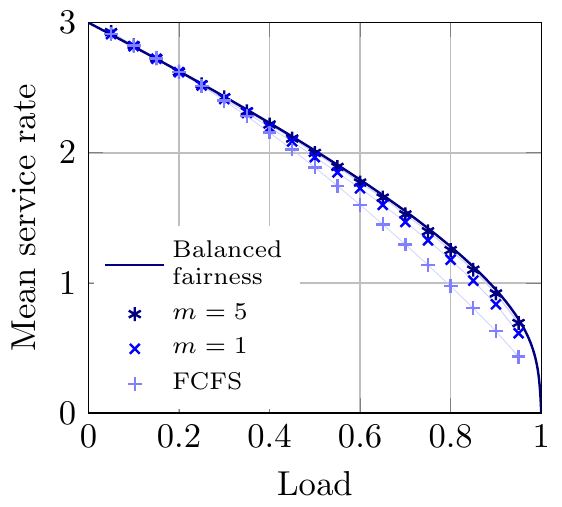} \hfill \includegraphics{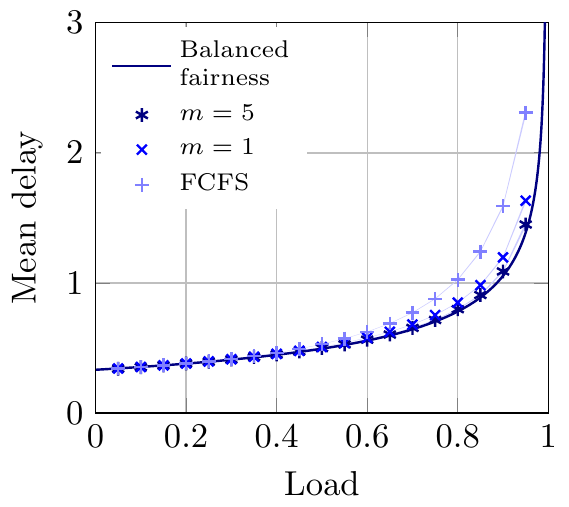}}
  \subfigure[Hyperexponential]{\includegraphics{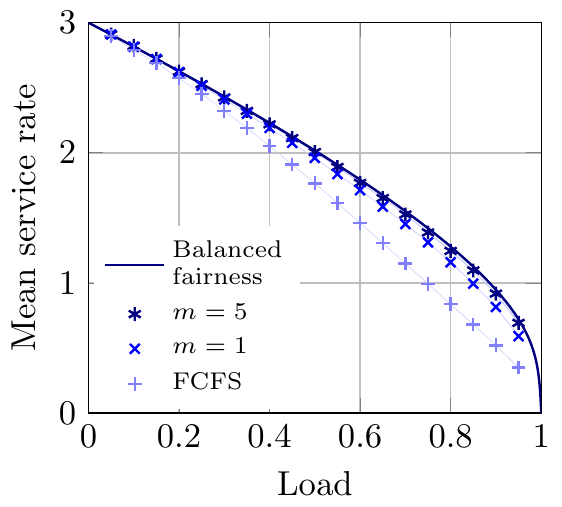} \hfill \includegraphics{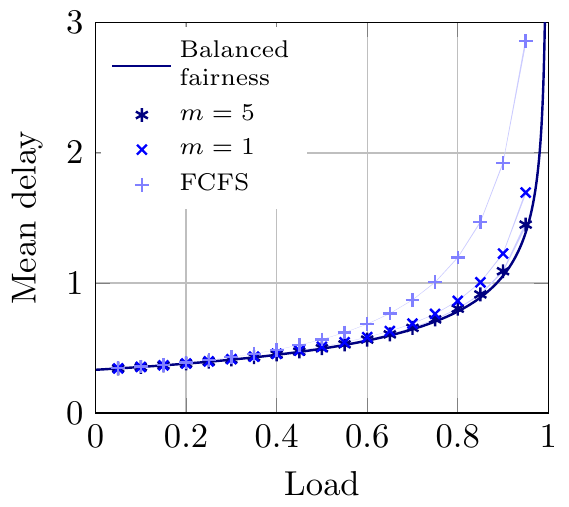}}
  \subfigure[Zipf number of exponentially distributed phases]{\includegraphics{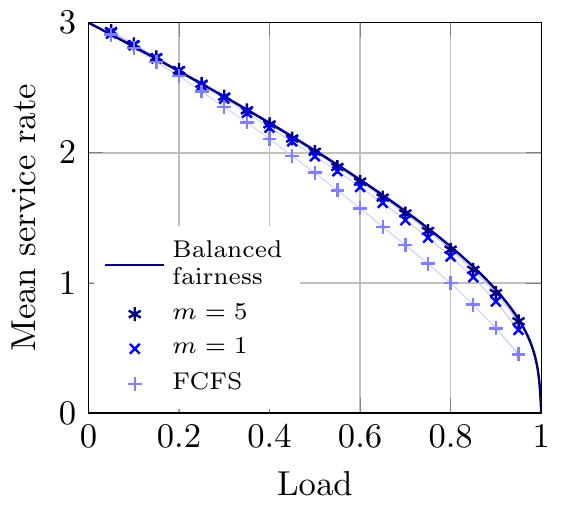} \hfill \includegraphics{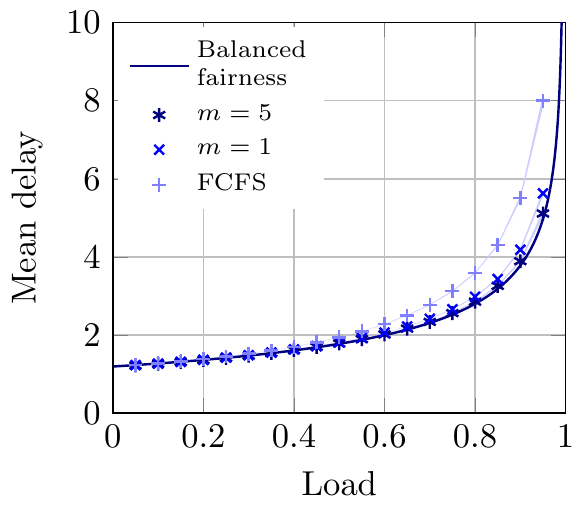}}
  \caption{Performance metrics for random assignment of $d = 3$ servers among $S = 100$.}
  \label{fig:random3}
\end{figure}

\section{Conclusion}
\label{sec:ccl}

We have introduced a new scheduling algorithm
to allocate the resources of a computer cluster according to balanced fairness.
This algorithm, which is based on service interruptions and resumptions,
can be viewed as an extension of round-robin scheduling algorithm in the context of resource pooling.
Its performance was studied by considering a new queueing model where jobs can be processed in parallel by several servers.
We have observed in particular that the aggregate state of the queue is that of a Whittle network,
and deduced the insensitivity property in the limit of an infinite number of service interruptions per job.
This has in turn allowed us to derive explicit expressions for the performance metrics
with an arbitrary graph of compatibilities.
The performance of the system when the number of service interruptions per job is finite was assessed by simulation.
We observed that only a few interruptions per job are sufficient in practice to obtain approximate insensitivity.

Our objectives for the future work are twofold.
First, we aim at refining our understanding of the system presented in this paper.
This notably involves assessing analytically the impact of the mean number of service interruptions
on the sensitivity of the resulting resource allocation.
We would also like to perform more simulations to compare the performance of our algorithm
with that of other existing scheduling policies,
regarding both the insensitivity to the job size distribution
and the efficiency of the resource utilization.

A second step would be to extend the current model and algorithm
and include practical constraints which are inherent to parallel computing.
We can notably mention the cost of coordination between servers, not only during the service but also upon service interruption.
Besides it would be interesting to consider alternative ways of enforcing frequent service interruptions,
which do not rely on exponentially distributed timers but instead utilize the structure of the real system considered.
Depending on the application, it may be possible for instance to pre-cut the jobs into smaller tasks of comparable size.
Finally, we would like to explore other variants of the queueing model, including
the representation of fork-join tasks using stochastic Petri networks \cite{petri},
the presence of negative customers \cite{G93,T13} and batch services \cite{BV91}
and the case of loss networks \cite{K91}.


\appendix

\renewcommand\theequation{A.\arabic{equation}}
\section*{Appendix}

We prove Theorem \ref{theo:stability} which states that the multi-server queue is stable if and only if
\begin{equation}
  \tag{\ref{eq:stability}}
  \forall A \subset \{1,\ldots,N\}, A \neq \emptyset, \quad
  \sum_{i \in A} \lambda_i < \mu(A).
\end{equation}

\paragraph{Necessary condition}

Assume that 
$\mu(A) \le  \sum_{i \in A} \lambda_i$ for some non-empty set $A \subset \{1,\ldots,N\}$.
For any $x \in \N^N \setminus \{0\}$ such that $A(x) \subset A$,
we also have $\mu(A(x)) \le \mu(A)$ since $\mu$ is non-decreasing,
so that
$$
\Phi(x)
= \frac1{\mu(A(x))} \sum_{i \in A(x)} \Phi(x-e_i)
\ge \frac1{\sum_{i \in A} \lambda_i} \sum_{i \in A(x)} \Phi(x-e_i),
$$
and by induction,
$$
\Phi(x)
\ge \frac{n!}{\prod_{i\in A} x_i!} \left( \frac1{\sum_{i \in A} \lambda_i} \right)^{n},
$$
where $n=\sum_{i\in A} x_i$.
Hence we obtain
\begin{align*}
  \sum_{x \in \N^N} \Phi(x) \prod_{i=1}^N {\lambda_i}^{x_i}
  &\ge \sum_{\substack{x \in \N^N: \\ A(x) \subset A}} \Phi(x) \prod_{i=1}^N {\lambda_i}^{x_i}
  = \sum_{n \ge 0} \sum_{\substack{x \in \N^N: \\ A(x) \subset A, \\ \sum_{i \in A} x_i = n}}
  \Phi(x) \prod_{i=1}^N {\lambda_i}^{x_i}, \\
  &\ge \sum_{n \ge 0}
  \sum_{\substack{x \in \N^N: \\ A(x) \subset A, \\ \sum_{i \in A} x_i = n}}
  \frac{n!}{\prod_{i\in A} x_i!} \prod_{i \in A} \left( \frac{\lambda_i}{\sum_{i \in A} \lambda_i} \right)^{x_i}, \\
  &= +\infty.
\end{align*}

\paragraph{Sufficient condition}

We first prove the following lemma.

\begin{lemma}
  \label{lem:min}
  Let  $\Psi$ be such that $\Psi(0) = 1$ and for all $x \neq 0$,
  \begin{equation}
    \label{eq:capa}
    \sum_{i \in A(x)} \frac{\Psi(x-e_i)}{\Psi(x)} \le \mu(A(x)).
  \end{equation}
  Then  $\Phi(x) \le \Psi(x)$ for all $x \in \N^N$.
\end{lemma}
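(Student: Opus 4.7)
}

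The plan is a straightforward induction on $n = \sum_{i=1}^N x_i$, the total number of jobs. The base case is $n = 0$, i.e.\ $x = 0$, where $\Phi(0) = \Psi(0) = 1$ by assumption, so the inequality holds with equality.

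For the inductive step, I fix $n \ge 1$ and assume $\Phi(y) \le \Psi(y)$ for every $y \in \N^N$ with $\sum_i y_i < n$. Given $x$ with $\sum_i x_i = n$, each vector $x - e_i$ for $i \in A(x)$ has total mass $n-1$, so the induction hypothesis applies to it. I would then start from the defining recursion \eqref{eq:recPhi} for $\Phi$,
\begin{equation*}
  \Phi(x) = \frac{1}{\mu(A(x))} \sum_{i \in A(x)} \Phi(x - e_i),
\end{equation*}
use the induction hypothesis $\Phi(x - e_i) \le \Psi(x - e_i)$ termwise to get
\begin{equation*}
  \Phi(x) \le \frac{1}{\mu(A(x))} \sum_{i \in A(x)} \Psi(x - e_i),
\end{equation*}
and finally invoke the hypothesis \eqref{eq:capa} on $\Psi$, namely $\sum_{i \in A(x)} \Psi(x - e_i) \le \mu(A(x))\, \Psi(x)$, to conclude $\Phi(x) \le \Psi(x)$. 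This closes the induction.

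There is essentially no obstacle here; the lemma is nothing more than the observation that $\Phi$ is the pointwise smallest function satisfying $\Phi(0) = 1$ together with the relaxed inequality \eqref{eq:capa}, because $\Phi$ achieves equality in its own recursion while $\Psi$ only satisfies an inequality in the correct direction. The only point worth stating explicitly is that $\mu(A(x)) > 0$ whenever $x \neq 0$ (guaranteed by the OI assumption $\mu(c) > 0$ for $c \neq \emptyset$), so dividing by $\mu(A(x))$ and multiplying by $\Psi(x) > 0$ are legitimate operations; positivity of $\Psi$ on nonzero states follows inductively from $\Psi(0) = 1$ and \eqref{eq:capa}, which forces $\Psi(x) \ge \sum_{i \in A(x)} \Psi(x - e_i) / \mu(A(x)) > 0$.
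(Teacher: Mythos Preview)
Your proof is correct and follows exactly the same induction on $n=\sum_i x_i$ as the paper, with the same chain of inequalities (the paper writes it starting from $\Psi(x)$, you start from $\Phi(x)$, but the content is identical). One small caveat: your closing remark that positivity of $\Psi$ ``follows inductively'' from \eqref{eq:capa} is circular as stated---rearranging \eqref{eq:capa} into $\Psi(x)\ge\sum_{i\in A(x)}\Psi(x-e_i)/\mu(A(x))$ already presupposes $\Psi(x)>0$---but this does not affect the main argument, since $\Psi(x)\neq 0$ is implicit in the hypothesis (it appears in a denominator) and in the paper's application $\Psi$ is manifestly positive.
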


\bp
The proof is by induction on $n = \sum_{i=1}^N x_i$.
The condition is true for $n = 0$ since $\Phi(0) = \Psi(0) = 1$.
Now let $n \ge 1$ and assume that $\Phi(x) \le \Psi(x)$ for all $x \in \N^N$ with $\sum_{i=1}^N x_i \le n-1$.
For each $x \in \N^N$ with $\sum_{i=1}^N x_i = n$, we obtain
$$
\Psi(x)
\ge \frac{\sum_{i \in A(x)} \Psi(x-e_i)}{\mu(A(x))}
\ge \frac{\sum_{i \in A(x)} \Phi(x-e_i)}{\mu(A(x))}
= \Phi(x),
$$
where the first inequality holds because $\Psi$ satisfies \eqref{eq:capa}
and the second holds by the induction assumption.
\ep \\

Assume that the stability condition \eqref{eq:stability} is satisfied.
The proof consists in choosing a function $\Psi$ that satisfies the assumptions of Lemma \ref{lem:min} and such that
$$
\sum_{x \in \N^N} \Psi(x) \prod_{i=1}^N {\lambda_i}^{x_i} < +\infty.
$$
In view of \eqref{eq:stability}, there exists $\eta \in \R_+^N$ such that
\begin{align*}
 \forall i = 1,\ldots,N, \quad  \lambda_i < \eta_i
  \quad  \text{and} \quad
 \forall A \subset \{1,\ldots,N\},\quad  \sum_{i \in A} \eta_i < \mu(A).
\end{align*}
We can choose for instance
$$
\forall i = 1,\ldots,N, \quad
\eta_i
= \lambda_i
+ \frac12 \min_{A \subset \{1,\ldots,N\}: i \in A}
\left( \frac{\mu(A) - \sum_{j \in A} \lambda_j}{|A|} \right).
$$
Now consider the balance function $\Psi$ defined by
$$
\forall x \in \N^N, \quad
\Psi(x) = \prod_{i=1}^N \frac1{{\eta_i}^{x_i}}.
$$
We have $\Psi(0) = 1$ and, for each $x \in \N^N \setminus \{0\}$,
\begin{align*}
  \sum_{i \in A(x)} \frac{\Psi(x-e_i)}{\Psi(x)}
  &= \sum_{i \in A(x)} \eta_i
  < \mu(A).
\end{align*}
We can thus apply Lemma \ref{lem:min} to $\Psi$
and we deduce that $\Phi(x) \le \Psi(x)$ for all $x \in \N^N$.
It follows that
\begin{align*}
  \sum_{x \in \N^N} \Phi(x) \prod_{i=1}^N {\lambda_i}^{x_i}
  &\le \sum_{x \in \N^N} \Psi(x) \prod_{i=1}^N {\lambda_i}^{x_i}
  = \sum_{x \in \N^N} \prod_{i=1}^N \left( \frac{\lambda_i}{\eta_i} \right)^{x_i}
  < +\infty,
\end{align*}
which concludes the proof.

\end{document}